\newtheorem{example}{Example}
\newtheorem{lemma}{Lemma}
\newcommand{\df}{\stackrel{\mbox{\scriptsize def}}{=}}
\renewcommand{\Pi}{P_{\mbox{\tiny{I}}}}
\newcommand{\Rmnum}[1] {\expandafter\@slowromancap \romannumeral #1@}
\begin{document}
\title{Enhanced Algebraic Error Control for Random Linear Network Coding}
\author{Zhiyuan Yan and Hongmei Xie \\Department of Electrical and Computer
Engineering, Lehigh University \\ Bethlehem, PA 18015, USA
\\
Email: \{yan, hox209\}@lehigh.edu}
\maketitle
\thispagestyle{empty}

\begin{abstract}
Error control is significant to network coding, since when unchecked, errors greatly deteriorate the throughput gains of network coding and seriously undermine both reliability and security of data. Two families of codes, subspace and rank metric codes, have been used to provide error control for random linear network coding. In this paper, we enhance the error correction capability of these two families of codes by using a novel two-tier decoding scheme. While the decoding of subspace and rank metric codes serves a second-tier decoding, we propose to perform a first-tier decoding on the packet level by taking advantage of Hamming distance properties of subspace and rank metric codes. This packet-level decoding can also be implemented by intermediate nodes to reduce error propagation. To support the first-tier decoding, we also investigate Hamming distance properties of three important families of subspace and rank metric codes, Gabidulin codes, K\"{o}tter--Kschischang codes, and Mahdavifar--Vardy codes. Both the two-tier decoding scheme and the Hamming distance properties of these codes are novel to the best of our knowledge.
\end{abstract}

\begin{keywords}
Random linear network coding, subspace codes, error control, rank metric codes
\end{keywords}

\section{Introduction}\label{sec: intro}
Network coding \cite{ahlswede_IT00} has the potential to fundamentally transform current and future communication networks (CNs) due to its promise of significant throughput gains. However, a key obstacle to its adoption in practical CNs is its vulnerability to errors caused by unreliable links or malicious nodes. If unchecked, errors greatly deteriorate the throughput gains of network coding and seriously undermine both reliability and security of data.

In this paper,  we focus on algebraic error control for network coding, coding-theoretic end-to-end error correction for network coding \cite{cai_itw02}. Similar to classical error control coding, end-to-end error correction for network coding involves only two ends of a  \emph{multicast}: an encoder at the source node adds redundancy to the transmitted data so that the decoder at any destination node can distill out data in the presence of errors, while the intermediate nodes are oblivious. In contrast, other error control approaches for network coding --- link-level error control (or \emph{channel coding}), packet-level error control such as cyclic redundancy check, and cryptographic approach
 --- all require extra operations at every intermediate node in the network. End-to-end error correction is embedded in network coding and does not require additional infrastructure as some cryptographic schemes do.

Also, we focus on error control for random linear network coding (RLNC) \cite{HMKKESL_IT06}.
In RLNC, all packets are treated as vectors over some finite field, or \emph{Galois field} (GF) of size $q$, denoted by GF$(q)$.
Given incoming packets
$u_1, u_2, \cdots, u_n$, an intermediate node forms an outgoing packet $v_i$ by linearly combining these vectors, that is, $v_i=\sum_{j=1}^{n} a_{i,j}u_j$, where $a_{i,j}$'s are randomly chosen from GF$(q)$.
Instead of using  network coding operations centrally designed to achieve the maximum throughput based on network topologies, RLNC achieves the maximum throughput \cite{HMKKESL_IT06} despite its distributed and random nature. Hence, RLNC is ideal for CNs that either are decentralized or have time-varying topologies \cite{HMKKESL_IT06}.
Our work focuses on RLNC due to its significance, but can be readily extended to more general network coding schemes.

Algebraic error control proposed for RLNC assume either \emph{coherent} or \emph{noncoherent} transmission models. Error control schemes of the first type \cite{cai_itw02} depend
on and take advantage of the underlying network topology or the particular network coding operations performed at intermediate
network nodes. Error control schemes of the second type \cite{KK_IT0808} assume
that the transmitter and receiver have no knowledge of such channel
transfer characteristics.

Two families of codes, \emph{subspace codes} \cite{KK_IT0808} and \emph{rank metric codes} \cite{Delsarte,Gab85, Roth91}, are appropriate codes for error control in noncoherent and coherent network coding, respectively. For subspace and rank metric codes, the relevant metrics for error control are not Hamming metrics. Also, their decoders operate on a set of packets at the destination nodes of a multicast. The first main contribution of this paper is an enhanced control scheme that involves two-tier decoding. In our error control scheme, the decoding of subspace or rank metric codes is the second-tier decoding, and a first-tier decoding is carried out on a packet level. This enhanced error control is enabled by the fact that for each data session a valid packet (a packet that is a linear combination of the transmitted packets) belongs to a linear block code. Thus, a subspace (or rank metric) code correspond to a collection of linear block codes. By taking advantage of the Hamming distance properties of the collection of linear block codes corresponding to subspace or rank metric codes, the first-tier decoding can enhance the error correction capability of subspace and rank metric codes. Since the first-tier decoding is on the packet level, it can be implemented by intermediate nodes to reduce error propagation. In addition to the two-tier decoding scheme, the other main contribution of this paper is the Hamming distance properties of the collection of linear block codes corresponding to three important families of subspace and rank metric codes, Gabidulin codes, K\"{o}tter--Kschischang (KK) codes, and Mahdavifar--Vardy (MV) codes. Both the two-tier decoding scheme and the Hamming distance properties of these codes are novel to the best of our knowledge.

The rest of the paper is organized as follows. Some preliminaries on subspace and rank metric codes are introduced in Section~\ref{sec: pre}. Section~\ref{sec:twotier} proposes our two-tier decoding scheme for subspace and rank metric codes. In Section~\ref{sec:Hamming}, we investigate the Hamming distance properties of Gabidulin, KK and MV codes.
Section~\ref{sec:conclusions} concludes the paper.

\section{Preliminary}\label{sec: pre}
\subsection{Error Control for Random Linear Network Coding}\label{sec:background}
Two families of codes, \emph{subspace codes} \cite{KK_IT0808} and \emph{rank metric codes} \cite{Delsarte,
Gab85, Roth91}, are appropriate codes for error control in noncoherent and coherent network coding, respectively.
A subspace code is a subset of the projective space \cite{KK_IT0808}. Two metrics, the subspace metric \cite{KK_IT0808} and the injection metric \cite{SK_IT1209}, have been defined for subspace codes.
When all subspaces over the operator channel are of the same dimension, a subspace code is reduced to a \emph{constant-dimension code} (CDC), a subset of all subspaces with the same dimension (called a \emph{Grassmannian}).
CDCs are interesting since they lead to simplified network protocols due to the fixed dimension.
Rank metric codes are important for two reasons. First, error control for coherent transmission models can be solved by using rank metric codes \cite{SK_IT1209}. In particular, Gabidulin codes \cite{Gab85}, a class of rank metric codes optimal with respect to
the Singleton bound \cite{Delsarte,
Gab85, Roth91}, maximize the error correction capability in coherent network coding \cite{SK_IT1209}.
Second, rank metric codes provide an alternative approach to investigating subspace codes, because subspace codes are intricately related to rank metric codes. The K\"{o}tter--Kschischang (KK) codes \cite{KK_IT0808}, a class of CDCs, are motivated by and
related to Gabidulin codes via the lifting operation \cite[Definition~3]{SKK_IT0908}.  Mahdavifar and Vardy proposed a family of CDCs with improved error correction capability \cite{mahdavifar_isit10,mahdavifar_it10}.

\subsection{Gabidulin Codes}\label{sec: GB}
Delsarte~\cite{Delsarte}, Gabidulin~\cite{Gab85} and Roth~\cite{Roth91} did pioneering work on rank metric codes. The rank distance between two vectors $\mathbf{x, y} \in$ GF$(q^m)^n$ is defined to be $d_r(\mathbf{x}, \mathbf{y}) = r(\mathbf{x-y}; q)$, where $r(\mathbf{x}; q)$ is the rank of vector $\mathbf{x}$ over GF$(q)$. The minimum rank distance of a code $\mathcal{C}$, denoted as  $d_r(\mathcal{C})$, is simply the minimum rank distance over all possible pairs of distinct codewords. For linear $(n,k)$ codes over GF$(q^m)$, when $n\le m$, the Singleton bound gives $d_r(\mathcal{C}) \le n-k+1$, and codes achieving the equality are called maximum-rank-distance (MRD) codes.

Gabidulin codes are a family of MRD codes proposed by Gabidulin \cite{Gab85}. An $(n,k)$ Gabidulin code $\mathcal{C_G}$ over GF$(q^m)$ ($n \le m$) is generated by $n$ elements $g_0, g_1, \ldots, g_{n-1}\in$ GF$(q^m)$ that are linearly independent over GF$(q)$. Given a message vector $\mathbf{u} = (u_0, u_1, \ldots, u_{k-1})\in$ GF$(q^m)^k$, the corresponding codeword in $\mathcal{C_G}$ is $\mathbf{c} = (u(g_0), u(g_1), \ldots, u(g_{n-1}))^T$, where $u(x)=\sum_{i=0}^{k-1}u_i x^{[i]}$ is a linearized polynomial with $[i]\df q^i$.

\subsection{KK Codes}
\label{sec: introKK}
KK codes~\cite{KK_IT0808} are a type of subspace codes for random linear network coding, where each codeword is a linear subspace of some ambient space $W$.
A KK code $\mathcal{C_K}$ is defined by $l$ ($l \le m$) elements $\alpha_0, \alpha_1, \ldots, \alpha_{l - 1} \in$ GF$(q^m)$ that are linearly independent over GF$(q)$. Given a message vector $\mathbf{u} = (u_0, u_1, \ldots, u_{k-1})\in$ GF$(q^m)^k$, a linearized polynomial is formed by $u(x) = \sum_{i = 0}^{k-1} u_i x^{[i]}$. Then a codeword of $\mathcal{C_K}$ is given by the $l$-dimensional subspace of $W = \langle \alpha_0, \alpha_1, \ldots, \alpha_{l - 1} \rangle \oplus$ GF$(q^m)$ spanned by $\{(\alpha_i, \beta_i): \beta_i = u(\alpha_i), i = 0, 1, \ldots, l-1\}$.


\subsection{MV Codes}
\label{sec: introMV}
MV codes~\cite{mahdavifar_isit10} are similar to but different from KK codes~\cite{KK_IT0808}. Suppose $\mathcal{C}_{\textrm{MV}}$ is an $l$-dimensional MV code over GF$(q^{ml})$ ($l$ divides $q-1$), generated by $\alpha_0, \alpha_1, \ldots, \alpha_{l-1}\in$ GF$(q^{ml})$, a specially chosen set of elements linearly independent over GF$(q)$. Then message vectors $\mathbf{u} = (u_0, u_1, \ldots, u_{k-1})$ are defined over GF$(q)$, and a linearized polynomial is formed by $u(x) = \sum_{i=0}^{k-1} u_i x^{[i]}$. Let $u^{\otimes i}(x)$ denote the composition of $u(x)$ with itself by $i$ times for any nonnegative integer $i$, and $u^{\otimes 0 }(x) =x$. Then the codeword $V$ corresponding to the message $\mathbf{u}$ is spanned by a set of vectors $v_i$ for $i = 1,2,\ldots, l$, where $v_1 = (\alpha_1, u(\alpha_1), u^{\otimes 2}(\alpha_1), \ldots, u^{\otimes L}(\alpha_1))$, $v_i = (\alpha_i, \frac{u(\alpha_i)}{\alpha_i},\ldots, \frac{u^{\otimes L}(\alpha_i)}{\alpha_i})$, and $L$ is some list size desired at the decoder. Note that $\frac{u^{\otimes j}(\alpha_i)}{\alpha_i} \in$ GF$(q^m)$ for any $j \ge 0$ and $i=2,3,\ldots, l$~\cite{mahdavifar_it10}. Then $V$ is an $l$-dimensional subspace of the $(l+Lm)$-dimensional ambient space $W = \langle \alpha_1, \alpha_2,\ldots, \alpha_l \rangle \oplus \underbrace{ \textrm{GF}(q^m) \oplus \cdots \oplus \textrm{GF}(q^m)}_{L\textrm{ times}}$.

\section{Two-Tier Decoding of Subspace and Rank Metric Codes}\label{sec:twotier}
The two-tier decoding is enabled by a key observation that all valid packets for existing subspace and rank metric codes constitute a collection of linear block codes.  For simplicity, we will assume a random linear network coding over GF$(q)$ and illustrate this with a CDC, which is a set of subspaces $V_i$ (with dimension $l$) of GF$(q)^m$ ($l\leq m$). We denote the basis of $V_i$ as $\left\{v_{i,0}, v_{i,1}, \cdots, v_{i, l-1}\right\}$, where $v_{i,j} \in \mbox{GF}(q)^m$ is a row vector. Thus, when $V_i$ is selected for the multicast, its basis $\left\{v_{i,0}, v_{i,1}, \cdots, v_{i, l-1}\right\}$ is injected into the network. At any destination node, a valid packet is of the form
$\sum_{j=0}^{l-1}a_j v_{i,j} = (a_0 \, a_1 \, \cdots a_{l-1}){\mathbf v}_i$, where $a_j\in$GF$(q)$ and ${\mathbf v}_i=\left[v_{i,0}^T\, v_{i,1}^T\, \cdots \, v_{i, l-1}^T\right]^T$. That is, all valid packets for $V_i$ constitutes a linear block code over  GF$(q)$ with a generater matrix ${\mathbf v}_i$, denoted as $C_i$, and all valid packets corresponding to all subspaces constitute a union code ${\mathcal{C_U}} = \bigcup_{V_i} {C}_i$. We note that ${\mathcal{C_U}}$ is not necessarily a linear code. Also, ${\mathcal{C_U}}$ depends on the CDC used, not the transmitted data. Hence, it can be assumed that ${\mathcal{C_U}}$ is known to all nodes.

At a destination node of a multicast, our two-tier decoding scheme is as follows. First, since the union code ${\mathcal{C_U}}$ is known, the destination node performs error detection or correction on each received packet by taking advantage of the Hamming distance properties of ${\mathcal{C_U}}$. For instance, when a received packet does not belong to the union code, it can be discarded. Alternatively, the minimum Hamming distance of ${\mathcal{C_U}}$ also ensures a certain correction radius. Thus, if the number of bit errors in a received packet is within the error correction radius, the first-tier decoding can also correct the bit errors. The second-tier decoding is performed on the set of packets produced by the first-tier decoder. By removing packets that are invalid and correcting packets corrupted by few bit errors, the first-tier decoder improves the error correction capability of the second-tier decoder.

By allowing the first- and second-tier decoders to pass information to each other, both can be further enhanced.
First, we show that how information from the first-tier decoder can help the second-tier decoder.  When errors have known locations, they are called \emph{erasures}. As in classical coding theory, the generalized rank decoder in \cite{SKK_IT0908} can correct \textbf{twice} as many erasures as errors by taking advantage of the extra location knowledge.
Thus, when the first-tier decoder marks unreliable packets as erasures for the second-tier decoder, the first-tier decoder enhances the correction capability of the second-tier decoder. Second, information from the second-tier decoder can be used to help the first-tier decoder. For instance, if the second-tier decoder is a list decoder (such as those proposed by \cite{mahdavifar_it10}), the second-tier decoder outputs a list ${\mathcal L}$ of subspaces as possible transmitted subspaces. Given this information, all valid packets now constitute $\bigcup_{V_i \in {\mathcal L}}  C_i$. Given the list ${\mathcal L}$ from the second-tier decoder, the first-tier decoder can re-decode the received packets with respect to $\bigcup_{V_i \in {\mathcal L}} {C}_i$. Since $\bigcup_{V_i \in {\mathcal L}} { C}_i \subseteq \bigcup_{V_i} {\mathcal{C_U}}$, this information from the second-tier decoder will improve the error detection/correction capability of the first-tier decoder.

We note that our two-tier decoding scheme described above is carried out on the packet level and at the destination nodes only, without affecting intermediate nodes. However, the first-tier decoding can be implemented by intermediate nodes involved in the multicast. This is because ${\mathcal{C_U}}$ and its Hamming distance properties are known to all nodes. In contrast, the second-tier decoding, existing decoders for subspace and rank metric codes, cannot be performed at intermediate nodes. For instance, an intermediate node may have only one packet in a session, and cannot perform the second-tier decoding. If intermediate nodes can discard invalid packets or correct packets corrupted by few bit errors, error propagation can be reduced at the expense of additional complexities at intermediate nodes.

Finally, we provide two remarks regarding the first-tier decoder. First, we emphasize that the first-tier decoding takes advantage of Hamming distance properties of  $\bigcup_{V_i} {C}_i$, and hence is different from link-level channel coding, cyclic redundancy check, and the cryptographic operations. Second, the improved error correction capability from using first-tier decoding does not require any additional redundancy, since the first-tier decoding takes advantage of the redundancy that already exists in the Hamming metric space.

\section{Hamming Distance Properties of Subspace and Rank Metric Codes}\label{sec:Hamming}
As discussed above, Hamming distance properties of  ${\mathcal{C_U}}$ are important to our two-tier decoding scheme. Thus, we investigate the Hamming distance properties of  ${\mathcal{C_U}}$ corresponding to three important families of subspace and rank metric codes, Gabidulin codes, KK codes, and MV codes. Our work in this area revolves around two aspects. The first is the minimum Hamming distance of $\bigcup_{V_i} {C}_i$, which is vital to the first-tier decoder. Furthermore, we also investigate the minimum Hamming distances of the individual component codes ${C}_i$'s for two reasons. First, the minimum Hamming distances of the individual component codes ${C}_i$'s provide an upper bound on the minimum Hamming distance of ${\mathcal{C_U}}$. Second, as described above, when partial information about the transmitted subspace is available, the first-tier decoder may consider only a small set ${\mathcal L}$ of component codes, $\bigcup_{V_i \in {\mathcal L}} {C}_i$. The minimum Hamming distances of ${C}_i$ also help us determine the minimum distance of $\bigcup_{V_i \in {\mathcal L}} {C}_i$. Also, even when the minimum Hamming distance of ${\mathcal{C_U}}$ is one, the first-tier decoding is possible as long as ${\mathcal{C_U}}$ is not the whole ambient space. Finally, we assume a random linear network coding over GF$(q)$, and consider Gabidulin, KK, and MV codes over extension fields of GF$(q)$. We note that
henceforth in this paper, an element in an extension field GF$(q^m)$ is sometimes treated as a length-$m$ row vector over GF$(q)$, depending on the context.

\subsection{Hamming Distance Properties of Gabidulin Codes}
\label{sec: HD_GB}
Consider an $(n, k)$ Gabidulin code $\mathcal{C_G}$ over GF$(q^m)$, generated by $g_0, g_1, \ldots, g_{n-1}\in$ GF$(q^m)$. For a codeword $\mathbf{c}\in$ $\mathcal{C_G}$, we treat each encoded symbol $u(g_i)$ as an $m$-dimensional row vector over GF$(q)$, and obtain an $n\times m$ matrix $G$. Using $G$ as a generator matrix, we obtain a linear block code $C$, can call codewords of $C$ \emph{valid vectors}. The union of all valid vectors corresponding to all the codewords of $\mathcal{C_G}$ is called a \emph{union code} $\mathcal{C_U}$, and each linear block code $C$ is referred to as a \emph{component code} of $\mathcal{C_U}$. We want to find the minimum Hamming distance the union code $\mathcal{C_U}$ and its component codes.


Any $(n,k)$ Gabidulin code $\mathcal{C_G}$ contains codewords $\mathbf{c}$ corresponding to $u(x)=u_i x^{[i]}$ with $u_i$ an arbitrary element in GF$(q^m)$ for $0\le i \le k-1$, while $u_{i'}=0$ for $i'=0,1,\ldots, i-1,i+1,\ldots,k-1$. Hence codewords of the component code $C$ generated by $u_ig_0^{[i]}, u_ig_1^{[i]}, \ldots, u_ig_{n-1}^{[i]}$ are always valid vectors of the union code $\mathcal{C_U}$. The distance property of those valid vectors reflects the minimum distance of $\mathcal{C_U}$.
\begin{lemma}\label{lemma: HD_GB_U}
The minimum Hamming distance of the union code $\mathcal{C_U}$ is 1.
\end{lemma}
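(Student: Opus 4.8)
The plan is to prove something a bit stronger than the stated bound: I claim the union code $\mathcal{C_U}$ is in fact the \emph{whole} ambient space $\mathrm{GF}(q)^m$, after which minimum Hamming distance $1$ is immediate, since the full space contains $\mathbf{0}$ together with every weight-one vector and has more than one element, so its minimum distance is well defined and equals $1$.

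First I would unwind what a valid vector is: it is any $\mathbf{a}G$ with $\mathbf{a}\in\mathrm{GF}(q)^n$, where the $n$ rows of $G$ are the length-$m$ expansions of $u(g_0),\dots,u(g_{n-1})$ for some message $\mathbf{u}\in\mathrm{GF}(q^m)^k$, and $\mathcal{C_U}$ is the union of these sets over all messages $\mathbf{u}$. So it suffices to exhibit, for each prescribed $w\in\mathrm{GF}(q^m)$, one message $\mathbf{u}$ and one combining vector $\mathbf{a}$ that produce (the expansion of) $w$.

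The key step uses exactly the single-term codewords highlighted just before the lemma. Take $u(x)=u_0x^{[0]}=u_0x$ (admissible because $k\ge 1$); then $u(g_0)=u_0g_0$, so the valid vector $(1,0,\dots,0)G$ is precisely the expansion of $u_0g_0$. Since $g_0,\dots,g_{n-1}$ are linearly independent over $\mathrm{GF}(q)$ we have $g_0\neq 0$, hence $u_0\mapsto u_0g_0$ is a bijection of $\mathrm{GF}(q^m)$, and letting $u_0$ range over $\mathrm{GF}(q^m)$ realizes every element of $\mathrm{GF}(q^m)$, equivalently every vector of $\mathrm{GF}(q)^m$, as a valid vector. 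In the language of the preamble: every component code generated by $u_0g_0,\dots,u_0g_{n-1}$ lies in $\mathcal{C_U}$ and already contains $u_0g_0$ itself, and as $u_0$ varies these elements sweep all of $\mathrm{GF}(q^m)$. Combined with the trivial inclusion $\mathcal{C_U}\subseteq\mathrm{GF}(q)^m$, this gives $\mathcal{C_U}=\mathrm{GF}(q)^m$, and the lemma follows.

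I do not anticipate any genuine obstacle. The only points needing a line of care are the non-degeneracy conventions $n\ge 1$ and $k\ge 1$ (so that $g_0$ exists and the monomial $u_0x$ is a legitimate codeword) and the elementary fact that multiplication by the nonzero field element $g_0$ permutes $\mathrm{GF}(q^m)$. The real content of the lemma is the slightly counterintuitive observation that, although a Gabidulin code has large \emph{rank} distance, forming the union of its component codes over all messages destroys all Hamming structure of $\mathcal{C_U}$ --- which is precisely why the analysis that follows must descend to the individual component codes.
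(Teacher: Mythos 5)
Your proposal is correct and follows essentially the same route as the paper: both arguments take the monomial messages $u(x)=u_i x^{[i]}$, observe that the first generator $u_i g_0^{[i]}$ of the resulting component code is itself a valid vector, and let $u_i$ sweep $\mathrm{GF}(q^m)$ so that these valid vectors exhaust all nonzero field elements, hence include weight-one vectors. Your additional observation that $\mathcal{C_U}$ is therefore the entire ambient space $\mathrm{GF}(q)^m$ is a harmless (and correct) strengthening of the paper's conclusion.
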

\begin{proof}
For a nonzero $u_i\in$ GF$(q^m)$, we have $u_i = \alpha^j$ for some $j \in \{0,1,\ldots, q^m-2\}$, where $\alpha$ is a primitive element in GF$(q^m)$. There are $q^m-1$ component codes $C_j$'s generated by generator matrices
\begin{equation} \label{equ: GB_sub_G}
G_j = \left( \begin{array}{c} \alpha^j g_0^{[i]} \\ \alpha^j g_1^{[i]} \\ \vdots \\ \alpha^j g_{n-1}^{[i]} \end{array} \right),
\end{equation}
for $j=0,1,\ldots, q^m-2$, respectively. In particular, the first row in $G_j$, $\alpha^j g_0^{[i]}$, is a valid vector of $\mathcal{C_U}$. Since $g_0$ is a nonzero element over GF$(q^m)$, so is $g_0^{[i]}$, hence we can express it as $g_0^{[i]} = \alpha^A$ for some $A \in \{0,1,\ldots,q^m-2\}$. Valid vectors include $\alpha^j \alpha^A=\alpha^{A+j}$ for $j=0,1,\ldots,q^m-2$, which are exactly all the nonzero elements in GF$(q^m)$, leading to a minimum Hamming distance of 1 for the union code.
\end{proof}


Now we consider minimum distances of the component codes. The general distance property is presented first, followed by some special cases on a code-to-code basis.
\begin{lemma}\label{lemma: HD_GB_sub}
Given an $(n,k)$ Gabidulin code $\mathcal{C_G}$, the minimum Hamming distance $d_H(C)$ of each component code $C$ satisfies $d_H(C)\le m-n+k$.
\end{lemma}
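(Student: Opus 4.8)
The plan is to obtain a lower bound on the $\mathrm{GF}(q)$-dimension of an arbitrary nonzero component code and then invoke the Singleton bound. Fix a nonzero codeword $\mathbf{c}\in\mathcal{C_G}$ associated with a message $\mathbf{u}=(u_0,u_1,\ldots,u_{k-1})$, so that $u(x)=\sum_{i=0}^{k-1}u_i x^{[i]}$ is a nonzero linearized polynomial whose $q$-degree (the largest index $i$ with $u_i\neq 0$) is at most $k-1$. The component code $C$ is the $\mathrm{GF}(q)$-linear block code of length $m$ whose generator matrix has rows $u(g_0),u(g_1),\ldots,u(g_{n-1})$, each regarded as a vector in $\mathrm{GF}(q)^m$. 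Since $a^{q^i}=a$ for every $a\in\mathrm{GF}(q)$, the evaluation map $x\mapsto u(x)$ is $\mathrm{GF}(q)$-linear; hence, identifying $\mathrm{GF}(q^m)$ with $\mathrm{GF}(q)^m$, the code $C$ equals the image $u\bigl(\vspan{g_0,\ldots,g_{n-1}}\bigr)$, and $\dim_{\mathrm{GF}(q)}C$ is precisely the dimension of this image.

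The first step is to show $\dim_{\mathrm{GF}(q)}C\ge n-k+1$. Because $g_0,\ldots,g_{n-1}$ are linearly independent over $\mathrm{GF}(q)$, the space $U=\vspan{g_0,\ldots,g_{n-1}}$ has $\mathrm{GF}(q)$-dimension $n$, and applying the rank--nullity theorem to $u|_U$ gives $\dim_{\mathrm{GF}(q)}C=n-\dim_{\mathrm{GF}(q)}(U\cap\ker u)\ge n-\dim_{\mathrm{GF}(q)}\ker u$. Now $\ker u$ is the set of roots of $u(x)$ lying in $\mathrm{GF}(q^m)$; it is a $\mathrm{GF}(q)$-subspace, and its cardinality is at most the number of roots of $u(x)$ viewed as an ordinary polynomial, which is at most $q^{k-1}$ because $\deg u(x)\le q^{k-1}$. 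Hence $\dim_{\mathrm{GF}(q)}\ker u\le k-1$, so $\dim_{\mathrm{GF}(q)}C\ge n-(k-1)=n-k+1\ge 1$; in particular $C$ is nontrivial and $d_H(C)$ is well defined.

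The second step is immediate: regarding $C$ as an $[m,k']$ linear code over $\mathrm{GF}(q)$ with $k'=\dim_{\mathrm{GF}(q)}C\ge n-k+1$, the Singleton bound gives $d_H(C)\le m-k'+1\le m-(n-k+1)+1=m-n+k$, which is the assertion.

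I do not foresee a genuine obstacle. The two points that deserve a word of care are (i) the standard property that a linearized polynomial of $q$-degree at most $k-1$ has at most $q^{k-1}$ roots and that these roots form a $\mathrm{GF}(q)$-subspace, which is exactly what bounds $\dim_{\mathrm{GF}(q)}\ker u$ by $k-1$; and (ii) restricting to component codes arising from nonzero codewords, since the component code of the zero codeword is $\{0\}$ and the statement is vacuous there. When $U\cap\ker u$ has dimension strictly less than $k-1$, the Singleton step already yields a value below $m-n+k$, so the inequality holds uniformly, and equality can occur only for codewords whose linearized polynomial attains $q$-degree $k-1$ with $\ker u\subseteq U$.
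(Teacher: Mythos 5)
Your proof is correct, and its skeleton is the same as the paper's: lower-bound the $\mathrm{GF}(q)$-dimension $k'$ of the component code by $n-k+1$, then apply the Singleton bound $d_H(C)\le m-k'+1$. The difference lies in how the dimension bound is obtained. The paper observes that $k'$ equals the rank of the codeword $\mathbf{c}$ over $\mathrm{GF}(q)$ and simply cites the MRD property of Gabidulin codes to get $k'=r\ge d_r(\mathcal{C_G})=n-k+1$. You instead reprove that fact from first principles: you identify $C$ with the image $u(\vspan{g_0,\ldots,g_{n-1}})$, apply rank--nullity, and bound $\dim\ker u\le k-1$ by counting roots of a linearized polynomial of $q$-degree at most $k-1$. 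This is exactly the standard proof that Gabidulin codes attain the Singleton bound in the rank metric, so your argument is self-contained where the paper's is shorter by leaning on a known result; both are valid, and your explicit handling of the zero codeword (where the statement is vacuous) is a small point the paper leaves implicit.
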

\begin{proof}
Suppose $\mathbf{c}$ is a codeword of $\mathcal{C_G}$, and the corresponding component code of $\mathcal{C_U}$ is $C$. Then the dimension of $C$, denoted by $k'$, is exactly $r$, the rank of the codeword $\mathbf{c}$ in $\mathcal{C_G}$. The Singleton bound gives $d_H(C) \le m - k'+1=m-r+1$. Since Gabidulin codes are MRD codes, $r\ge d_r(\mathcal{C_G})=n-k+1$. Hence we have $d_H(C)\le m-n+k$.
\end{proof}


\begin{lemma} \label{lemma: HD_GB_sub1}
The component code $C_j$ generated by Eq.~(\ref{equ: GB_sub_G}) has a minimum Hamming distance of $d_H(C_j)\le m-n+1$ for any $j\in\{0,1,\ldots, q^m-2\}$.
\end{lemma}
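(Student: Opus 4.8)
The plan is to reuse the mechanism already set up in the proof of Lemma~\ref{lemma: HD_GB_sub}. The component code $C_j$ with generator matrix $G_j$ from Eq.~(\ref{equ: GB_sub_G}) is exactly the linear block code obtained by expanding, coordinatewise over $\mathrm{GF}(q)$, the Gabidulin codeword $\mathbf{c} = (\alpha^j g_0^{[i]}, \alpha^j g_1^{[i]}, \ldots, \alpha^j g_{n-1}^{[i]})$ associated with the message polynomial $u(x) = \alpha^j x^{[i]}$ (that is, $u_i = \alpha^j$ and $u_{i'} = 0$ for $i' \neq i$). As established in Lemma~\ref{lemma: HD_GB_sub}, the dimension of such a component code equals $r = r(\mathbf{c}; q)$, the rank of the codeword over $\mathrm{GF}(q)$, and the Singleton bound then gives $d_H(C_j) \le m - r + 1$. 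So it suffices to show that $r = n$ for this particular $\mathbf{c}$.

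To do that, I would note that $r$ is the dimension of the $\mathrm{GF}(q)$-span of $\{\alpha^j g_\ell^{[i]} : 0 \le \ell \le n-1\}$ inside $\mathrm{GF}(q^m)$. The two facts to invoke are: (a) the map $x \mapsto x^{[i]} = x^{q^i}$ is a $\mathrm{GF}(q)$-linear bijection of $\mathrm{GF}(q^m)$; and (b) multiplication by the nonzero element $\alpha^j$ is also a $\mathrm{GF}(q)$-linear bijection of $\mathrm{GF}(q^m)$. Since $g_0, g_1, \ldots, g_{n-1}$ are linearly independent over $\mathrm{GF}(q)$ by the definition of a Gabidulin code, applying (a) and then (b) preserves this independence, so $\alpha^j g_0^{[i]}, \alpha^j g_1^{[i]}, \ldots, \alpha^j g_{n-1}^{[i]}$ are $n$ vectors that are linearly independent over $\mathrm{GF}(q)$. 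Hence $r = n$, since the rank cannot exceed the number $n$ of entries of $\mathbf{c}$.

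Substituting $r = n$ into $d_H(C_j) \le m - r + 1$ yields $d_H(C_j) \le m - n + 1$, which is the claimed bound (and which refines Lemma~\ref{lemma: HD_GB_sub} for this family of component codes, as $m - n + 1 \le m - n + k$). The only step that requires any care --- the ``main obstacle,'' though it is a mild one --- is establishing that $\mathbf{c}$ has full rank $n$, i.e., facts (a) and (b): that taking $q^i$-th powers and multiplying by a fixed nonzero scalar are $\mathrm{GF}(q)$-linear bijections of $\mathrm{GF}(q^m)$ and therefore preserve $\mathrm{GF}(q)$-linear independence. Everything else is a direct application of the Singleton bound together with the dimension-equals-rank identity from the preceding lemma.
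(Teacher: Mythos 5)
Your proposal is correct and follows essentially the same route as the paper: both arguments reduce to showing that the $n$ elements $\alpha^j g_0^{[i]},\ldots,\alpha^j g_{n-1}^{[i]}$ remain $\mathrm{GF}(q)$-linearly independent (the paper via an explicit contradiction using $\sum_s a_s g_s^{[i]} = (\sum_s a_s g_s)^{[i]}$ and the invertibility of $\alpha^j$, you via the observation that Frobenius and nonzero scalar multiplication are $\mathrm{GF}(q)$-linear bijections), and then both apply the Singleton bound to the resulting $n$-dimensional length-$m$ code. Your framing through the dimension-equals-rank identity of Lemma~\ref{lemma: HD_GB_sub} is a cosmetic repackaging of the paper's direct full-rank argument, not a genuinely different proof.
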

\begin{proof}
Since $g_0, g_1,\ldots, g_{n-1}$ are linearly independent over GF$(q)$, $\alpha^j g_0^{[i]}, \alpha^j g_1^{[i]},\ldots, \alpha^j g_{n-1}^{[i]}$ are also linearly independent for fixed $i, j \in\{0,1,\ldots, q^m-2\}$. Otherwise, we can find a set of nontrivial elements $a_0, a_1,\ldots,a_{n-1}\in$ GF$(q)$, such that $\sum_{s=0}^{n-1} a_s (\alpha^j g_{s}^{[i]})=\alpha^j \sum_{s=0}^{n-1} a_s g_{s}^{[i]}=0$. Note that $\alpha^j$ is a nonzero element over GF$(q^m)$, hence we can multiply the previous equation with $(\alpha^i)^{-1}$ on both equation, and obtain $\sum_{s=0}^{n-1} a_s g_{s}^{[i]}= (\sum_{s=0}^{n-1} a_s g_{s})^{[i]}=0$, leading to $\sum_{s=0}^{n-1} a_s g_{s}=0$. This contradicts the assumption that $g_0, g_1,\ldots, g_{n-1}$ are linearly independent. Hence the generator matrix $G_j$ has a full rank $n$, hence generating an $n$ dimensional linear block code with length $m$, resulting in $d_H(C_j)\le m-n+1$ from Singleton bound.
\end{proof}

From Lemma~\ref{lemma: HD_GB_sub1}, when $n=m$, all component codes generated by $G_j$ in Eq.~(\ref{equ: GB_sub_G}) have minimum Hamming distance of 1 since $G_j$ spans the entire space of GF$(q^m)$. When a polynomial basis is used to represent the elements of GF$(q^m)$, component codes with a minimum Hamming distance of 1 exist even if $n< m$. Let us consider $G_j$ in Eq.~(\ref{equ: GB_sub_G}) with $i=0$. From previous analysis, we know that for a fixed $g_s$ with $s\in\{0,1,\ldots,n-1\}$, there exists $A_s \in \{0,1,\ldots,q^m-2\}$ such that there exists a valid vector $\alpha^{A_s} g_s = 1$. Since $g_s$'s are different, $A_s$'s are also different, leading to at least $n$ component codes with minimum Hamming distance of 1 when a polynomial basis is used.


\subsection{Hamming Distance Properties of KK Codes}
\label{sec: HD_KK}
Consider an $l$-dimensional KK code $\mathcal{C_{KK}}$ over GF$(q^m)$, generated by $l$ linearly independent elements $\alpha_0, \alpha_1, \ldots, \alpha_{l-1}\in$ GF$(q^m)$. Each codeword or subspace $C$ of the KK code $\mathcal{C_{KK}}$ is an $l$-dimensional subspace, called a component code, and vectors contained in $C$ valid vectors. The union of valid vectors contained in all the codewords of $\mathcal{C_{KK}}$ is referred to as a union code $\mathcal{C_U}$. Note that each valid vector can be written as $(a,b)$, where $a\in$ GF$(q^m)$ is a linear combination of $\alpha_0, \alpha_1, \ldots, \alpha_{l-1}$, and $b\in$ GF$(q^m)$ is obtained by $b=u(a)$, where $u(x)$ is the linearized polynomial from the message vector.

We also start from the minimum Hamming distance of the union code. Following similar arguments in Section~\ref{sec: HD_GB}, we consider valid vectors obtained from linearized polynomials $u(x)=u_i x^{[i]}$ with $u_i\in$ GF$(q^m)$ for $i\in \{0,1,\ldots, k-1\}$.

\begin{lemma} \label{lemma: HD_KK_U}
The minimum Hamming distance of the union code $\mathcal{C_U}$ is 1.
\end{lemma}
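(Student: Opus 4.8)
The plan is to produce a single valid vector of Hamming weight exactly $1$; since $\mathcal{C_U}$ plainly contains nonzero vectors (every basis vector $(\alpha_j,u(\alpha_j))$ of every codeword is one, as $\alpha_j\neq 0$), this already pins $d_H(\mathcal{C_U})$ down to $1$. Recall that a valid vector of $\mathcal{C_U}$ has the form $(a,u(a))$ with $a=\sum_{j=0}^{l-1}a_j\alpha_j\in\langle\alpha_0,\ldots,\alpha_{l-1}\rangle$ and $u(x)$ the linearized polynomial of some codeword; its first $l$ coordinates form the header $(a_0,\ldots,a_{l-1})$ and its last $m$ coordinates carry $u(a)\in\mathrm{GF}(q^m)$, so its weight equals $\wh(a_0,\ldots,a_{l-1})+\wh(u(a))$. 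Hence a valid vector has weight $1$ exactly when $a$ is a nonzero $\mathrm{GF}(q)$-scalar multiple of a single $\alpha_j$ \emph{and} $u(\alpha_j)=0$.

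Mimicking Section~\ref{sec: HD_GB}, I would look at the codewords with $u(x)=u_i x^{[i]}$, whose component code is spanned by $(\alpha_0,u_i\alpha_0^{[i]}),\ldots,(\alpha_{l-1},u_i\alpha_{l-1}^{[i]})$. Choosing $u_i=0$, i.e.\ the all-zero message --- which is a legitimate codeword of $\mathcal{C_{KK}}$ --- this component code is spanned by $(\alpha_0,\mathbf{0}),\ldots,(\alpha_{l-1},\mathbf{0})$, and the valid vector $(\alpha_0,\mathbf{0})=(1,0,\ldots,0)$ has Hamming weight $1$. Thus $d_H(\mathcal{C_U})\le 1$, hence $=1$. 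If one insists on a nonzero codeword, the same weight-$1$ vector $(\alpha_0,\mathbf{0})$ also arises, whenever $k\ge 2$, from the message with $u(x)=x^{[1]}-\alpha_0^{q-1}x$, whose kernel is exactly $\langle\alpha_0\rangle$, so that $u(\alpha_0)=0$.

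The one point that needs care --- and where the argument genuinely differs from Lemma~\ref{lemma: HD_GB_U} --- is that the Gabidulin device of scaling one encoded symbol through all of $\mathrm{GF}(q^m)\setminus\{0\}$ is unavailable here: in a KK codeword the header $a$ and the payload $u(a)$ are rigidly coupled, so any valid vector with a nonzero payload has Hamming weight at least $2$. This forces the weight-$1$ vector to come from a codeword polynomial that annihilates some $\alpha_j$; note that, unlike in the Gabidulin case, the zero message is admissible here precisely because its component code is the full $l$-dimensional header space rather than the trivial code $\{\mathbf{0}\}$. Everything else is routine, so I expect the proof to be very short.
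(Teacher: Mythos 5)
Your argument breaks on the representation of the header. In this paper a valid vector $(a,b)$ is expanded over $\mathrm{GF}(q)$ by writing \emph{both} field elements $a,b\in\mathrm{GF}(q^m)$ as length-$m$ row vectors, giving a length-$2m$ vector; the first component is \emph{not} the coordinate vector $(a_0,\ldots,a_{l-1})$ of $a$ with respect to the basis $\alpha_0,\ldots,\alpha_{l-1}$. This is explicit in Section~\ref{sec: HD_KK}, where each component code is described as a length-$2m$ code, and in Lemma~\ref{lemma: HD_KK_sub}, where $d_H(C_0)$ is identified with the minimum distance of the length-$m$ code generated by the rows $\alpha_0,\ldots,\alpha_{l-1}$ and can be pushed up to $m-l+1$ by choosing Reed--Solomon rows; under your convention $C_0$ would always have minimum distance $1$ and that lemma would be vacuous. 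Consequently your exhibited vector $(\alpha_0,\mathbf{0})$ has Hamming weight $\wh(\alpha_0)$, not $1$ (in the paper's own example it has weight $2$), and since the minimum \emph{weight} of $\mathcal{C_U}$ equals $d_H(C_0)$ and can be made as large as $m-l+1$, no weight-$1$ valid vector need exist at all. The lemma therefore cannot be proved by exhibiting one.

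The deeper issue is that you conflate minimum weight with minimum distance for the (nonlinear) union code, and on that basis you dismiss exactly the device that works. It is true that any single valid vector with nonzero payload has weight at least $2$, but the paper does not argue via weights: it fixes the header $a=\alpha_0$ and lets the message range over $u(x)=u_ix^{[i]}$ with $u_i\in\mathrm{GF}(q^m)$, so that the payload $u(\alpha_0)=u_i\alpha_0^{[i]}$ sweeps through every element of $\mathrm{GF}(q^m)$ across \emph{different} component codes. Among the resulting valid vectors $(\alpha_0,b)$ one can pick two whose payloads differ in a single $\mathrm{GF}(q)$-coordinate, giving a pair of distinct valid vectors at Hamming distance $1$. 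The rigid coupling of header and payload within one codeword is irrelevant, because the two vectors being compared come from two different codewords that happen to share the same header. To repair your proof you would need to switch to this pairwise argument, i.e., carry out the scaling device of Lemma~\ref{lemma: HD_GB_U} with the header held fixed.
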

\begin{proof}
For a nonzero $u_i\in$ GF$(q^m)$, we express $u_i=\gamma^j$ for some primitive elements $\gamma \in$ GF$(q^m)$ and some $j\in\{0,1,\ldots, q^m-2\}$. Consider a component code $C_j$ obtained from the a generator matrix
\begin{equation} \label{equ: KK_sub_Gj}
G_j=\left(\begin{array}{cc}
(\alpha_0, & \gamma^j \alpha_0^{[i]}) \\
(\alpha_1, & \gamma^j \alpha_1^{[i]}) \\
\vdots & \vdots \\
(\alpha_{l-1}, & \gamma^j \alpha_{l-1}^{[i]})
\end{array} \right).
\end{equation}
Clearly, the first row $(\alpha_0,  \gamma^j \alpha_0^{[i]})$ in $G_j$ is a valid vector. Similar to Section~\ref{sec: HD_GB}, we can write the nonzero element $\alpha_0^{[i]} = \gamma^A$ for some $A\in\{0,1,\ldots,q^m-2\}$, and obtain valid vectors $(\alpha_0, b)$, where $b$ can take all the nonzero elements in GF$(q^m)$. Hence the minimum Hamming distance of the union code is 1.
\end{proof}

The minimum distance of the union code $\mathcal{C_U}$ is examined across component codes in Lemma~\ref{lemma: HD_KK_U}. Now for each component code $C$, which is an $l$-dimensional linear block code with length $2m$, the minimum distance satisfies $d_H(C) \le 2m-l+1$. Further, we can construct $\mathcal{C_K}$ such that the minimum distances of component codes are bounded from below.
\begin{lemma} \label{lemma: HD_KK_sub}
Let us denote by $C_0$ the component code corresponding to $u(x)=0$, and $C$ any component code of $\mathcal{C_U}$. Then $d_H(C_0) \le d_H(C)$. Furthermore, we can construct $C_0$ with $d_H(C_0)=m-l+1$ when $q\ge m$.
\end{lemma}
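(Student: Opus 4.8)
The plan is to prove the two assertions separately: the inequality $d_H(C_0)\le d_H(C)$ follows from the GF$(q)$-linearity of the polynomial $u(x)$, while the achievability of $d_H(C_0)=m-l+1$ follows from a Reed--Solomon-type construction of the defining elements $\alpha_i$.

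For the inequality, I would first record the shape of an arbitrary codeword of a component code $C$ corresponding to a message polynomial $u(x)$. A generic codeword is $\sum_{i=0}^{l-1} a_i(\alpha_i, u(\alpha_i))$ with $a_i \in$ GF$(q)$; since $u(x)$ is a linearized polynomial it is GF$(q)$-linear, so $\sum_i a_i u(\alpha_i) = u\big(\sum_i a_i\alpha_i\big)$, and writing $a = \sum_i a_i\alpha_i \in \langle\alpha_0,\ldots,\alpha_{l-1}\rangle$ the codeword becomes $(a, u(a))$. Because $\alpha_0,\ldots,\alpha_{l-1}$ are linearly independent over GF$(q)$, the codeword is nonzero exactly when $a\ne 0$. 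The ambient space $W$ being a direct sum over GF$(q)$, the Hamming weight splits as $\wh\big((a,u(a))\big) = \wh(a) + \wh(u(a)) \ge \wh(a) = \wh\big((a,0)\big)$, and $(a,0)$ is precisely the codeword of $C_0$ with the same coefficient vector $(a_0,\ldots,a_{l-1})$. Thus every nonzero codeword of $C$ has weight at least that of a nonzero codeword of $C_0$; since both are linear block codes over GF$(q)$, this gives $d_H(C_0)\le d_H(C)$.

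For the achievability when $q\ge m$, I would observe that the map $a\mapsto(a,0)$ is a weight-preserving isomorphism from the length-$m$, dimension-$l$ code over GF$(q)$ generated (as a matrix over GF$(q)$) by $\alpha_0,\ldots,\alpha_{l-1}$ onto $C_0$, so $d_H(C_0)$ equals the minimum distance of that $[m,l]$ code; the Singleton bound then gives $d_H(C_0)\le m-l+1$, and only a choice of the $\alpha_i$ attaining equality remains. Since the rows of any generator matrix are automatically linearly independent, it suffices to take $\alpha_0,\ldots,\alpha_{l-1}$ to be the rows of a generator matrix of an $[m,l]$ MDS code over GF$(q)$; a Reed--Solomon code evaluated at $m$ distinct points of GF$(q)$ works and exists precisely because $q\ge m$. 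This produces a legitimate KK code $\mathcal{C_K}$ whose component code $C_0$ has minimum Hamming distance exactly $m-l+1$. The main point requiring care is bookkeeping rather than depth: one must confirm that $\wh((a,b))=\wh(a)+\wh(b)$ is exactly the direct-sum structure $W=\langle\alpha_0,\ldots,\alpha_{l-1}\rangle\oplus$ GF$(q^m)$ as GF$(q)$-spaces, and that using MDS generator rows as the $\alpha_i$ yields a valid KK code; the hypothesis $q\ge m$ enters only to guarantee existence of the MDS code.
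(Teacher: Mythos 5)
Your proposal is correct and follows essentially the same route as the paper: compare each codeword $(a,u(a))$ of $C$ with the codeword $(a,0)$ of $C_0$ to get $d_H(C_0)\le d_H(C)$, identify $C_0$ with the $[m,l]$ code generated by the $\alpha_i$ as rows, and achieve the Singleton bound via a Reed--Solomon generator matrix when $q\ge m$. Your write-up is in fact slightly more careful than the paper's, since you explicitly invoke the GF$(q)$-linearity of $u(x)$ to show every codeword of $C$ has the form $(a,u(a))$ and note that nonzero codewords of $C$ correspond to nonzero codewords of $C_0$, a step the paper leaves implicit.
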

\begin{proof}
A component code $C$ of $\mathcal{C_U}$ is spanned by $(\alpha_s, u(\alpha_s))$'s for $s=0,1,\ldots,l-1$. Hence valid vectors can be written as $(a,b)$, where $a, b\in$ GF$(q^m)$, and $a$ is a linear combination of $(\alpha_s, u(\alpha_s)$'s. In particular, valid codewords of $C_0$, which corresponds to the zero linearized polynomial, always take the form $(a,0)$. Hence given a valid vector $(a,b)$ of component code $C$, there's always a valid vector $(a,0)$ of $C_0$. Note that the Hamming weight of $(a,0)$ is always no greater than $(a,b)$. Hence we have $d_H(C_0) \le d_H(C)$. On the other hand, the minimum distance of $C_0$ is exactly the minimum Hamming distance of a traditional $l$-dimensional linear block code $C_0'$, whose $l\times m$ generator matrix is composed of the $l$ row vectors $\alpha_0, \alpha_1,\ldots, \alpha_{l-1}$, leading to $d_H(C_0)\le m-l+1$. When $q\ge m$, the bound is achievable by selecting $\alpha_0, \alpha_1,\ldots,\alpha_{l-1}$ to be row vectors of the generator matrix of an $(m,l)$ RS codes.
\end{proof}

Lemma~\ref{lemma: HD_KK_sub} points out that $C_0$ has the smallest minimum Hamming distance, hence we want to improve the distance properties of the component code by designing $C_0$ with larger minimum Hamming distance, and Lemma~\ref{lemma: HD_KK_sub} indicates achievability when $q\ge m$. When $q < m$, we can also achieve $d_H(C_0)>1$ by using codes such as BCH codes.

Although Lemma~\ref{lemma: HD_KK_U} establishes a minimum Hamming distance of one for the union code, it is still able to provide error detection based on the union code. The reason is that  the cardinality of the union code is always smaller than that of the ambient space, as shown below in Lemma~\ref{lemma: HD_KK_cardi}.

\begin{lemma} \label{lemma: HD_KK_cardi}
The cardinality of the union code corresponding to a KK code is always smaller than that of the ambient space.
\end{lemma}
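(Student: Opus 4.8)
The plan is to prove the lemma by a direct cardinality comparison: show $|\mathcal{C_U}| < |W|$, where $W = \langle \alpha_0,\ldots,\alpha_{l-1}\rangle \oplus \mathrm{GF}(q^m)$ is the ambient space from Section~\ref{sec: introKK}. First I would record the size of $W$: since $\alpha_0,\ldots,\alpha_{l-1}$ are linearly independent over GF$(q)$, the subspace $\langle \alpha_0,\ldots,\alpha_{l-1}\rangle$ has $q^l$ elements, and the second summand contributes another factor $q^m$, so $|W| = q^{l+m}$. I would also point out that a naive union bound over component codes is useless here, since there are $q^{mk}$ message vectors and each component code has $q^l$ valid vectors, giving only $|\mathcal{C_U}| \le q^{mk+l}$; the argument must instead exploit the common structure shared by all valid vectors.

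The key step is the structural description of valid vectors recalled just before Lemma~\ref{lemma: HD_KK_sub}: every valid vector of every component code of $\mathcal{C_U}$ has the form $(a, u(a))$ with $a \in \langle \alpha_0,\ldots,\alpha_{l-1}\rangle$ and $u(x) = \sum_{i=0}^{k-1} u_i x^{[i]}$ a linearized polynomial determined by some message. Since a linearized polynomial has no constant term, $u(0) = 0$ regardless of the message, so the only valid vector whose first coordinate equals $0$ is the all-zero vector. On the other hand, for every $\beta \in \mathrm{GF}(q^m)$ the vector $(0, \beta)$ lies in $W$, because the second summand of $W$ is the whole field; taking any nonzero $\beta$ (there are $q^m - 1 \ge 1$ of them) produces a vector of $W$ that is not in $\mathcal{C_U}$. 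Hence $|\mathcal{C_U}| \le q^{l+m} - (q^m - 1) < q^{l+m} = |W|$, which is the claim.

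If a sharper statement is wanted, I would note that the same partition according to the value of $a$ actually gives equality: for each nonzero $a$, letting $u_0$ range over GF$(q^m)$ already makes $u(a) = u_0 a$ sweep out all of GF$(q^m)$, so $|\mathcal{C_U}| = q^{l+m} - q^m + 1$; only the strict inequality is needed, however. There is essentially no hard step here — the only points requiring care are confirming that $(0,\beta)$ genuinely belongs to the ambient space $W$ as defined in Section~\ref{sec: introKK}, and that the no-constant-term property of linearized polynomials really does forbid a nonzero second coordinate sitting over a zero first coordinate. Both are immediate, so I expect the proof to be short.
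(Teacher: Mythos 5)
Your proposal is correct and follows essentially the same route as the paper: both rest on the structural fact that every valid vector has the form $(a,u(a))$ with $u$ linearized, and both arrive at the count $|\mathcal{C_U}| = (q^l-1)q^m + 1 = q^{l+m} - q^m + 1 < q^{l+m}$. The only cosmetic difference is that you establish the strict inequality by exhibiting the missing vectors $(0,\beta)$, $\beta \neq 0$, directly (using $u(0)=0$), whereas the paper counts $\mathcal{C_U}$ exactly and then compares; your framing needs slightly less for the inequality itself, but the content is the same.
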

\begin{proof}
Suppose $(a,b)$ is a valid vector of the union code, than $a\in$ GF$(q^m)$ is a linear combination of $\alpha_0,\alpha_1,\ldots, \alpha_{l-1}$, and $b=u(a)\in$ GF$(q^m)$ for some message polynomial $u(x)$. In fact, if $a\ne 0$, any $b\in$ GF$(q^m)$ would make $(a,b)$ a valid vector following a similar argument as in Lemma~\ref{lemma: HD_KK_U}. Given that $a$ belongs to the $l$-dimensional subspace spanned by $\alpha_0,\alpha_1,\ldots, \alpha_{l-1}$, there are a total of $q^l-1$ nonzero combinations leading to a valid $a$. Hence there are a total of $(q^l-1)q^m+1=|\mathcal{C_U}|$ valid vectors in the union code, including the all zero element. The ambient space has a dimension of $l+m$ as shown in Section~\ref{sec: introKK}, and hence a total number of $q^{l+m}$ vectors. It is easy to see that $|\mathcal{C_U}|=q^{l+m}-q^m+1 < q^{l+m}$.
\end{proof}

Note that received vectors with a form of $(a,b)$ belong to the $2m$-dimensional space over GF$(q)$. Based on Lemma~\ref{lemma: HD_KK_cardi}, even given the worst case with $l=m$, where the ambient space is the $2m$-dimensional subspace over GF$(q)$, the union code still has error detection capabilities since its cardinality is smaller than $q^{2m}$. Hence it is possible that the union code may provide some error correction capabilities at the receiver when $l < m$.

\begin{example}
We construct a KK code over GF$(2^3)$ with $l=2$ and $k=1$. Select $\alpha_0=\gamma^3$ and $\alpha_1=\gamma^4$, where $\gamma$ is a root of the irreducible polynomial $x^3+x+1$, and can be verified to be a primitive element over GF$(2^3)$. It can be verified that codewords of $C_0$ are $(\gamma^3,0), (\gamma^4,0), (\gamma^6,0)$ as well as the all zero vector, with $d_H(C_0)=2$. Thus, $C_0$ maximizes its minimum Hamming distance. In the union code, component codes corresponding to $u(x)=\gamma^j$ with $j=0,1,\ldots,6$ gives valid vectors $(\gamma^3,\gamma^3), (\gamma^3,\gamma^4), (\gamma^3,\gamma^5), (\gamma^3,\gamma^6), (\gamma^3,1), (\gamma^3,\gamma), (\gamma^3,\gamma^2)$, resulting in $d_H(\mathcal{C_U})=1$.
\end{example}


\subsection{Hamming Distance Properties of MV Codes}
\label{sec: HD_MV}
In this section, we examine the Hamming distances of the component codes first, and then present distance property for the union code $\mathcal{C_U}$.

\begin{lemma} \label{lemma: HD_MV_sub}
Let us denote by $C_0$ the component code corresponding to $u(x)=0$, and $C$ any component code of $\mathcal{C_U}$. Then $d_H(C_0) \le d_H(C)$. Further, we can construct $C_0$ with $d_H(C_0)=ml-l+1$ when $q\ge ml$.
\end{lemma}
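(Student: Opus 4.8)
The plan is to mirror the structure of Lemma~\ref{lemma: HD_KK_sub}. The first half establishes the inequality $d_H(C_0) \le d_H(C)$ for an arbitrary component code $C$, and the second half exhibits a concrete choice of the generating set $\alpha_1, \ldots, \alpha_l$ making the bound tight when $q \ge ml$. For the first half, I would recall from Section~\ref{sec: introMV} that a codeword $V$ of an MV code is spanned by the vectors $v_1 = (\alpha_1, u(\alpha_1), \ldots, u^{\otimes L}(\alpha_1))$ and $v_i = (\alpha_i, u(\alpha_i)/\alpha_i, \ldots, u^{\otimes L}(\alpha_i)/\alpha_i)$ for $i = 2, \ldots, l$. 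When $u(x) = 0$, every $u^{\otimes j}(\alpha_i) = 0$ for $j \ge 1$ (since $u^{\otimes 0}(x) = x$ but the later compositions vanish), so each generating vector of $C_0$ has the form $(\alpha_i, 0, 0, \ldots, 0)$, and hence every valid vector of $C_0$ is $(a, 0, \ldots, 0)$ where $a$ is a GF$(q)$-linear combination of $\alpha_1, \ldots, \alpha_l$. The key observation is that for any valid vector $(a, b_1, \ldots, b_L) \in C$ obtained by applying a linear combination with coefficients $a_1, \ldots, a_l \in$ GF$(q)$ to the rows of the generator matrix of $C$, the \emph{same} combination applied to the rows of the generator matrix of $C_0$ produces the valid vector $(a, 0, \ldots, 0)$ with exactly the same first coordinate $a$. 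Since the Hamming weight of $(a, 0, \ldots, 0)$ never exceeds that of $(a, b_1, \ldots, b_L)$, every nonzero codeword of $C$ projects to a codeword of $C_0$ of no greater weight (and it is nonzero as long as $a \neq 0$; the case $a = 0$ needs a brief check that it does not produce a lighter nonzero word in $C$ than the minimum of $C_0$ — in fact if $a=0$ then the whole word may or may not vanish, but this only helps, since we are lower-bounding $d_H(C)$). This yields $d_H(C_0) \le d_H(C)$.

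For the second half, I would observe that $C_0$, as a code over GF$(q)$, is exactly the row space of the $l \times (l + Lm)$ matrix whose $i$-th row is $(\alpha_i, 0, \ldots, 0)$; padding with zeros does not change the minimum distance, so $d_H(C_0)$ equals the minimum Hamming distance of the $l$-dimensional GF$(q)$-code generated by the $l$ row vectors $\alpha_1, \ldots, \alpha_l$, each viewed as a length-$ml$ vector over GF$(q)$ (since the $\alpha_i$ live in GF$(q^{ml})$). By the Singleton bound this is at most $ml - l + 1$. To achieve equality, I would choose $\alpha_1, \ldots, \alpha_l$ to be the rows of a generator matrix of an $(ml, l)$ Reed--Solomon code over GF$(q)$, which is MDS and hence has minimum distance exactly $ml - l + 1$; this requires $q \ge ml$ for the RS code to exist with block length $ml$. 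The only subtlety here is that the $\alpha_i$ chosen for an MV code are not arbitrary — they are required to be a "specially chosen set of elements linearly independent over GF$(q)$" with the property that $u^{\otimes j}(\alpha_i)/\alpha_i \in$ GF$(q^m)$ for $i \ge 2$. I would need to argue (or cite \cite{mahdavifar_it10}) that this special structure is compatible with the RS choice, or more carefully, that among the admissible generating sets there is one whose row vectors over GF$(q)$ form an MDS code; since linear independence over GF$(q)$ is automatic for RS generator rows and the MV admissibility condition can be met inside the relevant subfield structure, this should go through, but it is the step I expect to require the most care.

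The main obstacle, then, is not the projection argument (which is essentially identical to Lemma~\ref{lemma: HD_KK_sub}) but the achievability claim: verifying that the Reed--Solomon choice of $\alpha_1, \ldots, \alpha_l$ is consistent with the additional algebraic constraints that the MV construction imposes on its generating elements. If those constraints turn out to be too rigid, the fallback is to claim only $d_H(C_0) \le ml - l + 1$ with achievability under a suitable further condition on $q$, or to construct $C_0$ from a nested-subfield MDS code matching the MV structure. I would flag this and, if space permits, spell out the admissible construction explicitly.
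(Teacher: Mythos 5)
Your proposal takes essentially the same route as the paper: the inequality $d_H(C_0) \le d_H(C)$ is obtained by the same projection argument borrowed from Lemma~\ref{lemma: HD_KK_sub}, and achievability of $ml-l+1$ is obtained, exactly as in the paper, by noting that $d_H(C_0)$ equals the minimum distance of the $(ml,l)$ code over GF$(q)$ generated by the $\alpha_i$'s and choosing these to be rows of a Reed--Solomon generator matrix when $q \ge ml$. The compatibility issue you flag --- whether the RS choice of $\alpha_i$'s can be reconciled with the special admissibility constraints of the MV construction --- is a legitimate gap, but the paper's own proof does not address it either; it simply asserts the RS selection.
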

\begin{proof}
The proof follows similar arguments in KK codes case as stated in Lemma~\ref{lemma: HD_KK_sub}. Note that the generator matrix of $C_0$ is
\begin{displaymath}
G_0 = \left( \begin{array}{ccccc}
\alpha_0 & 0 & 0 & \ldots & 0 \\
\alpha_1 & 0 & 0 & \ldots & 0 \\
\vdots & \vdots & \vdots & \ddots & \vdots\\
\alpha_{l-1} & 0 & 0 & \ldots & 0
\end{array} \right).
\end{displaymath}
Accordingly, $d_H(C_0)$ is exactly the same as $d_H(C_0')$, where $C_0'$ is an $(ml,l)$ linear block code whose $l\times ml$ generator matrix is composed of $\alpha_0, \alpha_1,\ldots, \alpha_{l-1}$, resulting in $d_H(C_0)\le ml-l+1$. Further, when $q\ge ml$, the equality is achievable by selecting $\alpha_0, \alpha_1,\ldots,\alpha_{l-1}$ to be row vectors of the generator matrix of an $(ml,l)$ RS codes.
\end{proof}
Note when $q< ml$, we can still use codes such as BCH codes to construct $C_0$ with $d_H(C_0)>1$ such that the minimum Hamming distance of each component code is bounded from below.

\begin{lemma} \label{lemma: HD_MV}
When $l=1$, $d_H(\mathcal{C_U}) \le m$, and $d_H(\mathcal{C_U}) \le \textrm{ min }\{ml-l+1, L\}$ if $l>1$. \end{lemma}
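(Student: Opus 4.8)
The overall plan is to establish each of the two upper bounds by exhibiting a single nonzero valid vector of small Hamming weight. This is legitimate because the all-zero vector lies in every codeword-subspace $V$ of the MV code, hence is itself a valid vector of $\mathcal{C_U}$; consequently $d_H(\mathcal{C_U}) \le \wh(w)$ for every nonzero valid vector $w$, so it suffices to construct such a $w$ of the desired weight.

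For $l = 1$, I take the all-zero message, so that $u(x)$ is the zero polynomial. Then $u^{\otimes j}(\alpha_1) = 0$ for all $j \ge 1$ while $u^{\otimes 0}(\alpha_1) = \alpha_1$, and the single basis vector collapses to $v_1 = (\alpha_1, 0, \ldots, 0)$, which is nonzero since $\alpha_1 \neq 0$. Because $l = 1$ gives $ml = m$, the leading entry $\alpha_1 \in \textrm{GF}(q^m)$ is a length-$m$ vector over $\textrm{GF}(q)$ and the remaining $L$ blocks vanish, so $\wh(v_1) = \wh(\alpha_1) \le m$; hence $d_H(\mathcal{C_U}) \le m$.

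Now let $l > 1$. For the bound $d_H(\mathcal{C_U}) \le ml - l + 1$, I note that the component code $C_0$ corresponding to $u(x) = 0$ satisfies $C_0 \subseteq \mathcal{C_U}$, so $d_H(\mathcal{C_U}) \le d_H(C_0)$, and $d_H(C_0) \le ml - l + 1$ by Lemma~\ref{lemma: HD_MV_sub}. This settles the first half of the bound for $l > 1$, which also subsumes the $l=1$ case since there $ml-l+1 = m$.

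The second half, $d_H(\mathcal{C_U}) \le L$, is the crux. The plan is again to produce a valid vector of Hamming weight at most $L$, but now the $L$ composition blocks must be forced to contribute almost nothing. I would choose the message $\mathbf{u}$ so that the iterates $u^{\otimes j}$ act tamely on the generators (for instance a message whose linearized polynomial annihilates, or acts as a scalar on, suitable $\textrm{GF}(q)$-combinations of the $\alpha_i$'s, so that many of the entries $u^{\otimes j}(\alpha_1)$ and $u^{\otimes j}(\alpha_i)/\alpha_i$ either vanish or cancel), and then pick a $\textrm{GF}(q)$-combination $w = \sum_i a_i v_i$ in which the $j$-th composition block is zero for all but a controlled number of indices $j \in \{1, \ldots, L\}$, so that these blocks jointly contribute at most $L$ to $\wh(w)$. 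The main obstacle is the leading block $\sum_i a_i \alpha_i$: it is nonzero for every nonzero valid vector and could a priori carry large Hamming weight, so one must invoke the special structure of the MV generators of \cite{mahdavifar_it10} to keep its weight contribution within the overall budget $L$. I expect this weight accounting, together with checking that the chosen message is admissible (its linearized polynomial having $q$-degree below $k$), to be the delicate part of the argument.
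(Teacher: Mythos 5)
Your treatment of the $l=1$ case and of the bound $d_H(\mathcal{C_U})\le ml-l+1$ matches the paper's (both reduce to the component code $C_0$ for $u(x)=0$ and to Lemma~\ref{lemma: HD_MV_sub}), but the bound $d_H(\mathcal{C_U})\le L$ for $l>1$ --- which you yourself flag as the crux --- is left as a plan rather than a proof, and the plan is aimed in a direction that cannot succeed. You restrict yourself to bounding the minimum \emph{weight} of a single nonzero valid vector, and then correctly identify the fatal obstacle: the leading block $\sum_i a_i\alpha_i$ of any nonzero valid vector is nonzero, and its Hamming weight is governed by the design of the $\alpha_i$'s. Indeed, Lemma~\ref{lemma: HD_MV_sub} is precisely about choosing the $\alpha_i$'s so that this weight is as large as $ml-l+1$, which can far exceed $L$; so a valid vector of weight at most $L$ need not exist at all, and no choice of ``tame'' message can remove the leading block from a nonzero combination of the $v_i$'s.

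The missing idea is that $\mathcal{C_U}$ is a union of linear codes, not a linear code, so its minimum distance may be attained between two valid vectors from \emph{different} component codes; one should therefore pick a pair whose leading blocks coincide and cancel. The paper fixes a generator $\alpha_s$ with $s\ge 1$ (so that the normalized entries $u^{\otimes j}(\alpha_s)/\alpha_s$ are used) and compares the valid vector $(\alpha_s,0,\dots,0)$ arising from $u(x)=0$ with the valid vector $(\alpha_s,1,\dots,1)$ arising from $u(x)=x$ (i.e., $u_0=1$, admissible since $k\ge1$, for which $u^{\otimes j}(\alpha_s)/\alpha_s=1$ for all $j$). Their difference is $(0,1,1,\dots,1)$, of Hamming weight $L$ when a polynomial basis represents $1\in\mathrm{GF}(q^m)$ by a weight-one vector, giving $d_H(\mathcal{C_U})\le L$. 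Without some such two-vector comparison your argument does not establish the $L$ term of the bound.
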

\begin{proof}
Again we start from the case with $u(x)=u_i x^{[i]}$, where $u_i \in$ GF$(q)$. If $l=1$, we obtain one-dimensional linear block codes generated by $(\alpha_0, u_i \alpha_0^{[i]}, u_i^2 \alpha_0^{[2i]},\ldots, u_i^{L} \alpha_0^{[Li]})$, where $L$ is the list size. Also denote the component code corresponding to $u(x)=0$ by $C_0$, and we have $d_H(\mathcal{C_U}) \le d_H(C_0)$. Note that $\alpha_0 \in$ GF$(q^m)$, hence if a polynomial basis is used, we have $d_H(\mathcal{C_U}) \le d_H(C_0) \le m$. Otherwise, a bigger upper bound could be possible. When $l > 1$, based on a fixed $\alpha_s \in$ GF$(q^{ml})$ with $s\in\{1,2,\ldots,l-1\}$, there exist codewords $(\alpha_s, 0,0,\ldots,0)$ corresponding to $u_0=0$, and $(\alpha_s, 1,1,\ldots,1)$ corresponding to $u_0=1$. Hence the minimum distance of the union code is bounded by $d_H(\mathcal{C_U})\le L$ if a polynomial basis is adopted.
Combining Lemma~\ref{lemma: HD_MV_sub}, we have $d_H(\mathcal{C_U}) \le \textrm{ min }\{ml-l+1, L\}$ if $l > 1$, and $d_H(\mathcal{C_U}) \le m$ if $l=1$.
\end{proof}

\begin{lemma} \label{lemma: HD_MV_cardi}
The cardinality of the union code corresponding to an MV code is smaller than that of the ambient space.
\end{lemma}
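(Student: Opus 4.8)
The ambient space $W$ is $(l+Lm)$-dimensional over GF$(q)$, so $|W| = q^{l+Lm}$. The plan is to bound $|\mathcal{C_U}|$ strictly below this by slicing the union code according to the first block of each valid vector, exactly as in the proof of Lemma~\ref{lemma: HD_KK_cardi}. Recall from Section~\ref{sec: introMV} that $W$ is the direct sum of $\langle \alpha_1, \alpha_2, \ldots, \alpha_l \rangle$, an $l$-dimensional subspace over GF$(q)$, with $L$ copies of GF$(q^m)$. Hence every vector of $W$ --- and in particular every valid vector --- can be written uniquely as $(a, b_1, b_2, \ldots, b_L)$ with $a \in \langle \alpha_1, \ldots, \alpha_l \rangle$ and $b_j \in$ GF$(q^m)$ for $j = 1, \ldots, L$.

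The key step is to show that the all-zero vector is the \emph{only} valid vector with $a = 0$. Any component code $C$ of $\mathcal{C_U}$ is spanned by $v_1, \ldots, v_l$, and by construction the first component of $v_i$ is the generator $\alpha_i$. Since $\alpha_1, \ldots, \alpha_l$ are linearly independent over GF$(q)$, a GF$(q)$-linear combination $\sum_{i=1}^{l} a_i v_i$ has first component $\sum_{i=1}^{l} a_i \alpha_i$, which vanishes only when $a_1 = \cdots = a_l = 0$. Thus the zero vector is the unique element of $C$ with zero first block, and since this holds for every component code, it is the unique element of $\mathcal{C_U}$ with $a = 0$.

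It remains to count. For each of the $q^l - 1$ nonzero values of $a$, the valid vectors with that first block form a subset of the $q^{Lm}$ vectors of $W$ sharing that first block. Combining this with the single valid vector having $a = 0$,
\begin{displaymath}
|\mathcal{C_U}| \;\le\; 1 + (q^l - 1)\,q^{Lm} \;=\; q^{l+Lm} - q^{Lm} + 1 \;<\; q^{l+Lm} \;=\; |W|,
\end{displaymath}
where the final inequality is strict because $q^{Lm} > 1$. This proves the claim.

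I do not expect a genuine obstacle here. The composition structure $u^{\otimes j}$ that makes MV codes delicate plays no role in this coarse count: the argument only needs the $a = 0$ slice of $\mathcal{C_U}$ to collapse to a single point, which follows purely from the linear independence of the generators $\alpha_i$. The one detail worth checking carefully is that every component code really is $l$-dimensional with the $\alpha_i$ appearing as the first components of its spanning vectors, so that the ``$a = 0$ contributes exactly $1$'' accounting is valid for each component code; this is part of the MV construction recalled above. If one wanted the exact cardinality, as was obtained for KK codes in Lemma~\ref{lemma: HD_KK_cardi}, one would additionally need to determine which tuples $(b_1, \ldots, b_L)$ arise for each nonzero $a$, but that refinement is unnecessary for the stated inequality.
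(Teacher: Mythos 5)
Your proof is correct and follows essentially the same route as the paper, which likewise reduces the count to at most $(q^l-1)q^{Lm}+1 < q^{l+Lm}$ by the argument of Lemma~\ref{lemma: HD_KK_cardi}; you have merely spelled out the slicing by the first block $a$ and the linear-independence step that the paper leaves implicit.
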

\begin{proof}
The proof follows a similar argument as in Lemma~\ref{lemma: HD_KK_cardi}. In this case, the cardinality of the union code is at most $(q^l-1)q^{Lm}+1$, and is less than $q^{l+Lm}$, which is the size of the ambient space.
\end{proof}

\begin{example} \label{example:MV1}
We construct an MV code over GF$(2^3)$ with $l=1, k=1$ and $L=2$. Let $\alpha_0=\gamma^5$, where $\gamma$ is a root of the irreducible polynomial $x^3+x+1$ such that $\gamma^5$ and its conjugates form a normal basis of GF$(2^3)$ over GF$(2)$. Then the union code $\mathcal{C_U}$ contains two component codes, $C_0$ generated by $(\gamma^5, 0,0)$, and $C_1$ generated by $(\gamma^5,\gamma^5,\gamma^5)$. We can directly get $d_H(C_0)=3, d_H(C_1)=9$, and $d_H(\mathcal{C_U})=3$.
\end{example}

\begin{example} \label{example:MV2}
Consider another MV code over GF$(3^6)$ with $m=3, l=2, k = 1$, and $L=5$. Let $\alpha_0=\gamma^{504}$ and $\alpha_0=\gamma^{294}$, where $\gamma$ is a root of the irreducible polynomial $x^6+x+2$ such that $\gamma^{15}$ and its conjugates form a normal basis of GF$(3^6)$ over GF$(3)$. It can be verified that $d_H(\mathcal{C_U})=d_H(C_0)=3$, while $d_H(C) = 9$ for any other component code $C$.
\end{example}

Example~\ref{example:MV1} shows a minimum distance equal to $m$ for an MV code with $l=1$, hence the upper bound in Lemma~\ref{lemma: HD_MV} is reachable. Both union codes in the examples above have a minimum Hamming distance of three, which enables single-bit error correcting capability at the first-tier decoder.

\subsection{Discussions}
Lemmas~\ref{lemma: HD_GB_U} and \ref{lemma: HD_KK_U} indicate that the minimum Hamming distance of the union code corresponding to a Gabidulin code or a KK code is one. Nevertheless,  the first-tier decoding can be carried out in three ways. First, as shown by Lemmas~\ref{lemma: HD_KK_cardi} and \ref{lemma: HD_MV_cardi}, the union code corresponding to a KK code or an MV code is a proper subset of the ambient vector space. Hence,  when a received packet does not belong to the union code, it can be discarded. Hence, the first-tier decoder is nontrivial even if the union code has a minimum distance of one. Second, the two examples in Section~\ref{sec: HD_MV} show that the minimum Hamming distance of the union codes corresponding to MV codes can be larger than one. In such cases, the union code not only allows the first-tier decoder to reject packets that do not belong to the union code, but also enables correction of single-bit errors in any received packet. Third, so far we have focused on the Hamming distance properties of the union code. In a sense, this is the worst-case scenario: if the first-tier decoder has no information about the transmitted packets, it has to assume any vector in the union code is likely. However, if the second-tier decoder is a list decoder (such as those proposed by \cite{mahdavifar_it10}), the second-tier decoder outputs a list ${\mathcal L}$ of subspaces as possible transmitted subspaces. Given this information, all valid packets now constitute $\bigcup_{V_i \in {\mathcal L}}  C_i$. Since $\bigcup_{V_i \in {\mathcal L}} { C}_i \subseteq \bigcup_{V_i} {\mathcal{C_U}}$, the minimum Hamming distance of $\bigcup_{V_i \in {\mathcal L}} { C}_i$ is no worse than that of the union code. Hence, given the list ${\mathcal L}$ from the second-tier decoder, the first-tier decoder can re-decode the received packets with respect to $\bigcup_{V_i \in {\mathcal L}} {C}_i$.

\section{Conclusions and Future Works}\label{sec:conclusions}
In this paper, we enhance the error correction capability of subspace and rank metric codes  by using a novel two-tier decoding scheme. While the decoding of subspace and rank metric codes serves a second-tier decoding, we propose to perform a first-tier decoding on the packet level by taking advantage of Hamming distance properties of subspace and rank metric codes. Furthermore, we investigate the Hamming distance properties of Gabidulin codes, KK codes, and MV codes.

Our future works will further investigate the Hamming distance properties of Gabidulin codes, KK codes, and MV codes. In particular, we will focus on the Hamming distance properties of $\bigcup_{V_i \in {\mathcal L}} { C}_i$ for some list $\mathcal L$ of codewords.

\bibliographystyle{IEEEtran}

\end{document}